\documentclass[letterpaper, peerreview, 12pt, draftclsnofoot]{IEEEtran}

\usepackage{graphicx}

\newcommand{\N}{\mathbb{N}}  % naturlige tal
\newcommand{\C}{\mathbb{C}}  % komplekse tal
\newcommand{\F}{\mathbb{F}}  % galois field 
\usepackage{fancyhdr}
\usepackage[T1]{fontenc}
\usepackage{physics}
\usepackage{cite}
\usepackage{url}
\usepackage{algorithm}
\usepackage{algpseudocode}
\usepackage{balance}                
\usepackage{dirtytalk} 
\usepackage{amsmath,amssymb,amsfonts,stmaryrd}		
\usepackage{mathtools}					
\mathtoolsset{showonlyrefs, showmanualtags}
\usepackage{textcomp}                 		
\usepackage{siunitx}						
\usepackage[official]{eurosym} 

\usepackage{soul}
\usepackage{bm}
\makeatletter
\newcommand*{\bigs}[1]{{\hbox{$\left#1\vbox to10\p@{}\right.\n@space$}}}
\makeatother
\makeatletter
\newcommand*{\biggs}[1]{{\hbox{$\left#1\vbox to17\p@{}\right.\n@space$}}}
\makeatother
\usepackage[hidelinks]{hyperref}
\usepackage[
    protrusion=true,
    activate={true,nocompatibility},
    final,
    tracking=true,
    kerning=true,
    spacing=true,
    factor=1100]{microtype}
\SetTracking{encoding={*}, shape=sc}{40}
\usepackage[english]{babel}
\usepackage{blindtext}
\usepackage{tikz}  
\usepackage{nicematrix}
\usetikzlibrary{calc}
\usetikzlibrary{decorations.pathreplacing,calligraphy}
\usepackage[framemethod=tikz]{mdframed}
\usepackage{pifont} 
\makeatletter
\newcommand*\bigcdot{\mathpalette\bigcdot@{.5}}
\newcommand*\bigcdot@[2]{\mathbin{\vcenter{\hbox{\scalebox{#2}{$\m@th#1\bullet$}}}}}
\makeatother
\usetikzlibrary{shapes.geometric, arrows, arrows.meta}
\usepackage{subcaption}
\usepackage{graphicx}
\usepackage{float}
\usepackage{booktabs}
\usepackage{multirow}
\usepackage{dblfloatfix}
\usepackage[shortlabels]{enumitem}
\DeclareSIUnit[quantity-product = ]\percent{\char`\%}
\usepackage{comment}
\usetikzlibrary{decorations.text}
\usetikzlibrary{decorations.pathreplacing,calligraphy}
\usepackage[framemethod=tikz]{mdframed}
\usetikzlibrary{shapes.geometric, arrows}
\usetikzlibrary{positioning}

\usepackage{ntheorem}
\theoremheaderfont{\normalfont\bfseries}
\theorembodyfont{\itshape}
\theoremsymbol{}
\theoremstyle{plain}
\newtheorem{definition}{Definition}

\usepackage[capitalise]{cleveref}
\crefname{figure}{figure}{figures}
\Crefname{figure}{Figure}{Figures}
\usepackage{bbm}
\newtheorem{theorem}{Theorem}

\newenvironment{proof}{%
    \par\noindent\textit{Proof: }\ignorespaces
}{%
    \hfill$\blacksquare$\par
}
\makeatletter

\usepackage{arydshln}

\title{Puncturing for Adaptive Entanglement-Assisted Stabilizer Codes}
\author{Nicolai Peder Bülow Pedersen, Jakob Kaltoft Søndergaard, Jaron Skovsted Gundersen, René Bødker Christensen, Petar Popovski
\thanks{N.P.B. Pedersen (npbp@es.aau.dk), J.K. Søndergaard (jakobks@es.aau.dk), J.S. Gundersen (jaron@es.aau.dk), and P. Popovski (petarp@es.aau.dk) are with the Department of Electronic Systems at Aalborg University.}
\thanks{R.B. Christensen (rene@math.aau.dk) is with the Department of Mathematical Sciences at Aalborg University.}
}

\date{}

\begin{document}

\maketitle
\pagestyle{plain}
\thispagestyle{empty}

\begin{abstract}
    Quantum error-correcting codes are essential for reliable quantum communication. The achievable parameters can be improved by using entanglement-assisted stabilizers at the cost of consuming pre-shared Bell pairs. If the availability of such Bell pairs is intermittent, the ability to adapt the code to available resources is beneficial. That is, to pick a single mother code that can be adapted to different entanglement budgets. To achieve this, we extend an existing puncturing construction for standard/unassisted stabilizers to the entanglement-assisted setting through a three-step procedure: Interpreting the code as an unassisted stabilizer, puncturing a receiver-side qubit, and recovering an entanglement-assisted representation. This procedure reduces the number of required Bell pairs by one while preserving the number of transmitted and logical qubits. 
    We derive a bound on the resulting distance loss and establish conditions for preserving the original code distance. Finally, we analyse distinct, randomly generated, entanglement-assisted stabilizers with five to eight transmitted qubits. Of these codes, $14.75\%$ have a greater distance than their associated unassisted representations. Among this subset, $81.4\%$ admit at least one puncturing choice that preserves the original entanglement-assisted distance. These results illustrate how puncturing can reduce entanglement requirements while maintaining code distance. 
\end{abstract}

\begin{IEEEkeywords}
    QEC, EAQECC, Stabilizers, Puncturing
\end{IEEEkeywords}

\section{\textbf{Introduction}}\label{sec_introduction}
Quantum error-correction (QEC) is essential for reliable quantum information processing due to the susceptibility of quantum states to noise~\cite{preskill1998reliable ,gottesman2010introduction, shor1995scheme}. Quantum information is inevitably affected by errors arising from imperfect quantum operations, transmissions through noisy channels, and storage in quantum memories. Stabilizer codes constitute a widely studied class of quantum error-correcting codes (QECCs) and have a close connection to classical linear codes~\cite{SC&QEC}. For instance, self-orthogonal classical linear codes can be used to construct stabilizer codes using the CSS construction~\cite{calderbank1996good, steane1996multiple}.

Entanglement-assisted quantum error-correcting codes (EAQECCs) extend the framework of stabilizer codes by exploiting pre-shared Bell pairs between the sender and receiver~\cite{CQEwE, hsieh2007general}. The presence of pre-shared entanglement removes the self-orthogonality requirement, allowing \emph{any} classical linear code to be used for construction of QECCs. Furthermore, EAQECCs can achieve lower code lengths for a fixed dimension and distance compared to stabilizer codes by investing entanglement~\cite{brun2014catalytic}. An EAQECC that encodes $k$ logical qubits into $n$ physical qubits while consuming $c$ pre-shared Bell pairs is denoted an $[\![n,k;c]\!]$ code. 

The operation of an entanglement-assisted communication system consists of a setup phase and a communication phase, as illustrated in \Cref{fig_com_framework}. During setup, a source distributed Bell pairs between the transmitter and receiver, each of which stores on half of every pair. During communication, the transmitter jointly encodes the logical information, locally prepared ancilla qubits, and its halves of the shared Bell pairs. The encoded qubits are then transmitted through a noisy channel. The receiver performs error correction and decoding using both the received qubits and its locally stored halves of the Bell pairs. The setup phase can be performed in advance of communication, allowing entanglement resources to be accumulated before the logical information is available for transmission. However, these resources must retain sufficient fidelity throughout the relevant storage intervals. In particular, the receiver-side qubits must remain coherent until decoding, consistent with the noiseless receiver assumption underlying the standard EAQECC framework. Entanglement purification~\cite{bennett1996purification} can improve the fidelity of the shared Bell pairs, while memory-protection techniques can mitigate errors during storage \cite{eaqecc_imperfect_ebits}. 

\begin{figure*}[h]
    \centering
    \includegraphics[width=\textwidth]{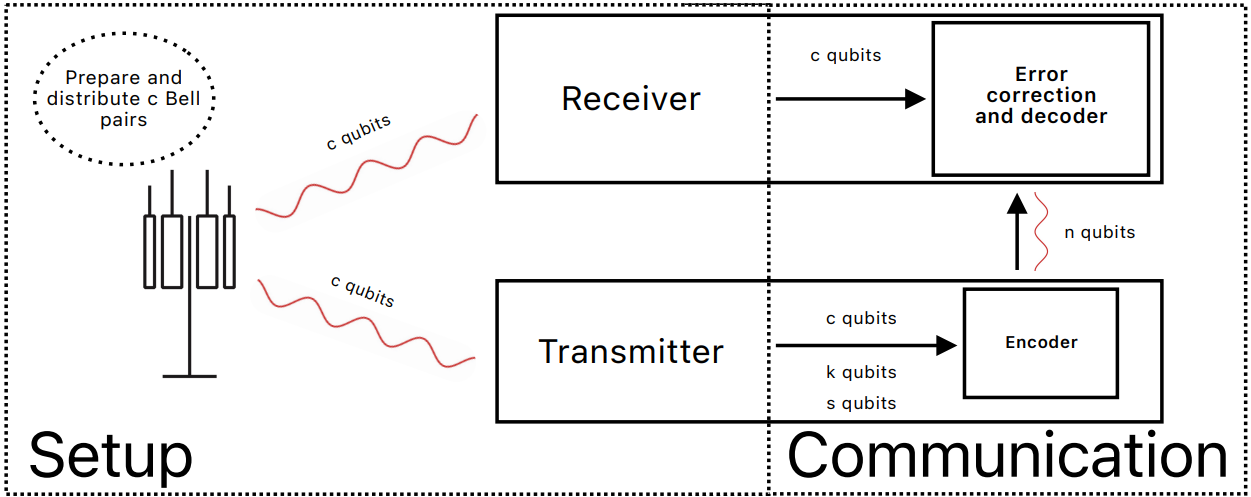}
    \caption{Communication framework for an entanglement-assisted stabilizer code. A base station distributes $c$ Bell pairs between the transmitter and receiver. The transmitter applies the encoding operation jointly to the logical qubits, ancilla qubits, and its halves of the shared Bell pairs. The resulting encoded qubits are transmitted through a noisy quantum channel. The receiver performs error correction and decoding using both the received qubits and its locally stored halves of the Bell pairs.}
    \label{fig_com_framework}
\end{figure*}

The availability of shared entanglement is therefore subject to physical constraints, including finite generation rates, limited quantum memory capacity, and decoherence during storage ~\cite{wehner2018quantum}. Moreover, entanglement is consumed during communication, so the number of Bell pairs available at a given transmission instance depends on the history of generation, consumption, storage, and loss. The available entanglement may consequently be insufficient for the originally chosen code. This motivates a resource-aware approach to EAQECCs, in which the code is adapted to the available entanglement.

Rather than considering each entanglement budget as a separate code-design problem, we seek a construction where a single base EAQECC can be adopted to a new EAQECC requiring no more entanglement than what is currently available. This provides a systematic approach where the base code determines a family of resource-adapted codes. While this approach does not guarantee optimal code construction, it provides a structural alternative to designing separate codes for each entanglement budget when necessary.

Puncturing of codes has long been an important tool for rate optimization in both classical and quantum coding theory~\cite{hagenauer1988rate, punct_gf4}. The code rate can be improved by removing parity bits/qubits after encoding, however, possibly at the cost of reduced error-correcting capabilities. In quantum coding theory, constructions with punctured codes were first established for qubit codes in~\cite{punct_gf4}. Recently,~\cite{punct_qudits} proposed code construction using puncturing of quantum qudit codes, while~\cite{PQSC} specialized the construction to stabilizer codes. These works characterize how the code parameters change under puncturing.

In this paper, we extend the puncturing construction of~\cite{PQSC} from stabilizer codes to EAQECCs. The proposed puncturing is resource-aware in the sense that puncturing is performed on the receiver-side qubit of Bell pairs such that the punctured code does not require more entanglement than currently available. This is fundamentally different than puncturing stabilizer codes, which is done to adapt to the conditions in which the code is used. A similar resource-aware puncturing framework was recently proposed for stabilizer codes used for encoded teleportation~\cite{abouamer2026latency}. This work develops a framework to puncture an EAQECC to adapt to the available resources that is consumed by the code. We develop a puncturing algorithm that systematically removes the receiver-side Bell-pair qubits while maintaining a valid EAQECC. This provides a systematic construction of a family of EAQECCs that can accommodate different resource availability scenarios obtained from a single base code. Extending the puncturing algorithm from \cite{PQSC} to encompass EAQECCs requires accounting for the different roles of the transmitter- and receiver-side qubits. Receiver-side qubits are assumed noiseless and provide an entanglement resource that supports error correction. Removing one can therefore weaken the protection of the transmitted information, even though no transmitted qubit is removed. For example, an EAQECC capable of correcting any single qubits error on the transmitted qubits may, after a single receiver-side puncture, become unable even to detect certain single-qubit errors. Although the existing stabilizer puncturing construction can be applied to the joint sender-receiver code, its usual distance guarantee concerns errors on the entire joint system and does not directly characterize this loss of protection. Moreover, the puncturing algorithm of~\cite{PQSC} transforms a stabilizer code on $n$ qubits into one on $n-1$ qubits. In the entanglement-assisted setting, a code using $n$ transmitted qubits and $c$ pre-shared Bell pairs has a joint transmitter--receiver stabilizer on $n+c$ qubits. Applying the existing construction to a receiver-side qubit therefore yields a stabilizer on $n+c-1$ qubits. However, this total qubit count alone does not establish that the resulting code admits an entanglement-assisted implementation using the same $n$ transmitted qubits and only $c-1$ pre-shared Bell pairs while preserving the number of logical qubits. Establishing this correspondence and analysing the resulting entanglement-assisted distance are therefore essential to extending the puncturing construction.

The remainder of this paper is organized as follows. \Cref{sec_preliminaries} introduces the necessary preliminaries, including stabilizer codes and an existing puncturing technique for stabilizer codes. \Cref{sec_eas} presents the construction of entanglement-assisted stabilizers. In \Cref{sec_punct_eas}, the puncturing framework is extended to entanglement-assisted stabilizers. We establish conditions under which the minimum distance is preserved under puncturing and provide a practical evaluation comparing the distance properties of entanglement-assisted stabilizers with their unassisted counterparts, including whether these properties are maintained under puncturing. \Cref{sec_discussion} discusses the assumptions underlying the results. Finally, \Cref{sec_conclusion} summarizes the main findings, and \Cref{sec_future_work} outlines open problems and directions for future research.
\section[Methodology]{\textbf{Preliminaries}}\label{sec_preliminaries}

Throughout this work, $\F_2$ denotes the Galois field of order $2$. Scalars in $\F_2$ are denoted by lowercase letters, while vectors in $\F_2^n$ are denoted by bold lowercase letters. For, $\mathbf{a},\mathbf{b}\in\F_2^n$, $[\mathbf{a}\ \mathbf{b}]\in\F_2^{2n}$ denotes their concatenation (the same notation may be used for scalars). Furthermore, we let $[a;b]:=\{i\}_{i=a}^{b}$ for $a,b\in\N$ with $a<b$. An $n$-qubit Pauli operator is on the form $P=P_1\otimes\cdots\otimes P_n\in(\C^2)^{\otimes n}$ with $P_i\in\{I,X,Z,Y\}$. We write Pauli operators as strings by omitting $\otimes$ and qubits that are affected trivially. For instance, $P=I\otimes X\otimes I\otimes Z$ is denoted $P=X_2Z_4$. The $n$-qubit Pauli group is defined as $\widetilde{\mathcal{P}}_n=\{\omega P\mid\omega\in\{\pm1,\pm i\},P\in\{I,X,Z,Y\}^{\otimes n}\}$. The corresponding quotient $\mathcal{P}_n=\widetilde{\mathcal{P}}_n/\langle\omega I\rangle$ is then the $n$-qubit Pauli group modulo the phase $\omega$. We denote vector $i$ in a set of vectors by superscript $(i)$ and entry $j$ of a vector by subscript $j$, thus $\mathbf{a}^{(i)}_j$ denotes entry $j$ of vector $i$ in the set $\{\mathbf{a}^{(l)}\}_l$.

\subsection{Stabilizer Codes}\label{subsec_stabilizers}
We briefly introduce the necessary theory for describing stabilizer codes\cite{QC&QI, SC&QEC}. A stabilizer is an Abelian subgroup $S\leq\widetilde{\mathcal{P}}_n$ satisfying $-I\notin S$. Suppose that $\{s_1,\ldots,s_{n-k}\}$ is a minimal generating set for $S$, denoted $S=\langle s_1,\ldots,s_{n-k}\rangle$. The associated codespace, $V_S$, of $S$ is then defined as the simultaneous $+1$ eigenspace of all $s\in S$:
\begin{align*}
    V_S := \big\{\ket{\psi}\mid s\ket{\psi}=\ket{\psi},\ \forall s\in S\big\}\subset(\mathbb{C}^{2})^{\otimes n}.
\end{align*}
Since the generating set is independent, $V_S$ is a $2^k$-dimensional subspace of $(\C^2)^{\otimes n}$.
Consequently, the stabilizer encodes $k$ logical qubits into $n$ physical qubits.

To characterize the error-correcting capabilities of a stabilizer, it suffices to consider phase-free Pauli errors (elements of $\mathcal{P}_n$) since global phases do not affect the Knill-Laflamme conditions~\cite{knill1997theory}. Any Pauli error $E\in\mathcal{P}_n$ belongs to one of three classes related to a stabilizer. 
First, if $E\in S$, then for any codeword $\ket{\psi}$, $E\ket{\psi}=\ket{\psi}$, thus the error acts trivially on the codespace. 
Second, if $E\in C_n(S)\setminus S$, where $C_n(S)$ denotes the centralizer of $S$ in $\mathcal{P}_n$, then $E$ commutes with every $s\in S$, but is not itself in the stabilizer. More precisely, for any $\ket{\psi}\in V_S$ and $s\in S$, $s(E\ket{\psi}) = Es\ket{\psi} = E\ket{\psi}$, implying $E\ket{\psi}\in V_S$. Such an error maps a codeword to another codeword and is therefore undetectable. 
Finally, if $E\in \mathcal{P}_n\setminus C_n(S)$, then $E$ anticommutes with at least one $s\in S$. In this case, $E\ket{\psi}$ lies in a subspace orthogonal to $V_S$, and the error can be detected through syndrome measurement. 
Combining these observations, a set of Pauli errors $\{E_i\}_i\subset \mathcal{P}_n$ is correctable if and only if $E_i^\dag E_j\notin C_n(S)\setminus S$. 

The $n$-qubit phase-free Pauli group is isomorphic to the additive group $\F_2^{2n}$, and hence we can represent the Pauli operators, modulo phase, in binary form through the map $r\colon\mathcal{P}_n\rightarrow\F_2^{2n}$ defined as
\begin{align}\label{eq_rowfunc}
    r\left(\bigotimes_{i=1}^n X^{a_i}Z^{b_i}\right):=[a_1\ \ldots\ a_n\ b_1\ \ldots\ b_n]=[\mathbf{a}\ \mathbf{b}],
\end{align}
where 
$a_i,b_i\in\F_2$ indicate whether $X_i,Z_i$ are present in the associated Pauli string. The weight of any $P\in\mathcal{P}_n$ is the number of qubits with support, namely
\begin{align*}
    w\left(\bigotimes_{i=1}^n X^{\mathbf{a}_i}Z^{\mathbf{b}_i}\right) := \sum_{i=1}^n \mathbbm{1}_{[\mathbf{a}_i=1\ \vee\  \mathbf{b}_i=1]}.
\end{align*}
Accordingly, the distance of a stabilizer code is then defined as
\begin{align}\label{eq_weight}
    d := \underset{E\in C_n(S)\setminus S}{\min} w(E).
\end{align}

A stabilizer that encodes $k$ logical qubits into $n$ physical qubits with distance $d$ is denoted by $[\![n,k,d]\!]$, and is referred to as an $[\![n,k,d]\!]$ stabilizer \cite{SC&QEC}.

A stabilizer may equivalently be specified through its associated stabilizer matrix. Given a stabilizer $S=\langle s_1,\ldots,s_{n-1}\rangle$, the stabilizer matrix is defined as
\begin{equation}\label{eq:stabilizerMatrix}
    G_S := [\mathbf{r(s_1)}^T\ \cdots\ \mathbf{r(s_{n-k})}^T]^T.
\end{equation}
The stabilizer matrix provides a binary representation of the stabilizer and allows its properties to be analysed using symplectic linear algebra. The symplectic inner product on $\F_2^{2n}$ is defined by
\begin{align}\label{eq_symp_form}
    \langle[\mathbf{a}^{(i)}\ \mathbf{b}^{(i)}],[\mathbf{a}^{(j)}\ \mathbf{b}^{(j)}]\rangle_S := \bra{\mathbf{a}^{(i)}}\ket{\mathbf{b}^{(j)}} + \bra{\mathbf{b}^{(i)}}\ket{\mathbf{a}^{(j)}}.
\end{align}
where $\bra{\cdot}\ket{\cdot}$ denotes the standard Euclidean inner product. Two Pauli operators $P,P'\in\mathcal{P}_n$ commute if and only if their corresponding representations $\mathbf{r(P)}, \mathbf{r(P')}$ are symplectically orthogonal. Since $S$ is Abelian by definition, it follows that $\langle \mathbf{r(s_i)},\mathbf{r(s_j)}\rangle_S=0$ for all $s_i,s_j\in S$.

\subsection{Puncturing Stabilizers}\label{sec_puncture}
In the following, we review the theory of puncturing stabilizers developed in \cite{PQSC}. We summarize the puncturing construction and the associated results concerning the parameters of the resulting codes.

Let $S_2:=\{\mathbf{r(s)}\ |\ s\in S\}$ denote the binary image of the stabilizer group under $r$. Since $S$ is a subgroup of $\mathcal{P}_n$, $S_2$ forms a linear subspace of $\F_2^{2n}$. Furthermore, the commutativity of $S$ implies that $S_2$ is symplectically self-orthogonal with respect to \eqref{eq_symp_form}. Let $S_2^{\bot}$ denote the binary image of the centralizer $C_n(S)$ under $r$. Since $S$ is Abelian, it follows that $S_2\subseteq S_2^\bot$. Henceforth, $S_2$ and $S_2^\bot$ will be referred to as the stabilizer subspace and centralizer subspace, respectively.

To define quantum puncturing, first introduce the classical puncturing map $\pi_{I}(\mathbf{v}) := [\mathbf{v}_i]_{i\notin I}$ for some $\mathbf{v}\in\F_2^n$ and an indexing set $I$. This map is then extended to $\F_2^{2n}$ by
\begin{align*}
    \pi^Q_{I}([\mathbf{a}\ \mathbf{b}]) := [\pi_{I}(\mathbf{a})\ \pi_{I}(\mathbf{b})].
\end{align*}
Let $m\in[1;n]$ be an integer and choose $[\alpha\ \beta]\in\F_2^2 \setminus \{[0,0]\}$. The punctured stabilizer subspace is then given by
\begin{equation}\label{eq_puncg2}
    S_2^{[\alpha\ \beta]}\! :=\! \{\pi^Q_{\{m\}}([\mathbf{a}\ \mathbf{b}])\ |\ [\mathbf{a}\ \mathbf{b}]\in S_2,\ \langle [\mathbf{a}_m\ \mathbf{b}_m],[\alpha\ \beta]\rangle_S=0\}.
\end{equation}
Throughout the remainder of this paper, it is assumed that the original stabilizer has distance $d\geq2$. For codes with distance $d=1$, puncturing may remove logical information, and the subsequent results may not hold. The reader is referred to \cite{PQSC} for analysis of such cases. 
 
 The following theorem establishes that $S_2^{[\alpha\ \beta]}$ defines a valid stabilizer subspace.
\begin{theorem}\label{theo_1}
    Let $S_2^{[\alpha\ \beta]}$ be defined by \eqref{eq_puncg2}. Then:
    \begin{enumerate}
        \item $S_2^{[\alpha\ \beta]}$ is symplectically self-orthogonal with respect to \eqref{eq_symp_form}.
        \item $\dim\left(S_2^{[\alpha\ \beta]}\right)=\dim(S_2) - 1$.
        \item $(S_2^{[\alpha\ \beta]})^{\bot} = (S_2^{\bot})^{[\alpha\ \beta]}$.
        \item $\dim((S_2^{\bot})^{[\alpha\ \beta]})=\dim(S_2^{\bot}) - 1$.
    \end{enumerate}
\end{theorem}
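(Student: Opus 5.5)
The plan is to treat the constraint $\langle[\mathbf{a}_m\ \mathbf{b}_m],[\alpha\ \beta]\rangle_S=0$ as a linear functional on the ambient space and follow what puncturing does to subspaces. For $\mathbf{v}=[\mathbf{a}\ \mathbf{b}]\in\F_2^{2n}$ set $\phi(\mathbf{v}):=\langle[\mathbf{a}_m\ \mathbf{b}_m],[\alpha\ \beta]\rangle_S$. Then $\phi$ is linear, and it equals the full symplectic product $\langle\mathbf{v},\mathbf{e}\rangle_S$, where $\mathbf{e}$ is the vector equal to $[\alpha\ \beta]$ at position $m$ and zero elsewhere; that is, $\mathbf{e}=r(P_m)$ for the single-qubit Pauli $P\in\{X,Z,Y\}$. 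Define $T:=S_2\cap\ker\phi$, so that $S_2^{[\alpha\ \beta]}=\pi^Q_{\{m\}}(T)$. The identity I will use repeatedly is: if $\mathbf{u},\mathbf{v}$ both satisfy $\phi=0$, their $m$-th coordinate pairs both lie in the one-dimensional space $\{0,[\alpha\ \beta]\}$. Since that space is symplectically self-orthogonal, coordinate $m$ contributes nothing, and so $\langle\pi^Q(\mathbf{u}),\pi^Q(\mathbf{v})\rangle_S=\langle\mathbf{u},\mathbf{v}\rangle_S$.

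For part 1, apply this identity to $T$: elements of $T$ are pairwise orthogonal because $S_2$ is self-orthogonal, and the identity carries that over to their images. For part 2, count dimensions in two steps. First, $\phi$ is nonzero on $S_2$: otherwise $\mathbf{e}\in S_2^\bot$, and since $\mathbf{e}$ has weight $1<d$ we would need $\mathbf{e}\in S_2$; I will rule that out (and also the case $\mathbf{e}\in S_2$ more generally) using $d\ge 2$ and the standing nondegeneracy assumption of \cite{PQSC}, for instance that no stabilizer acts as a single-qubit Pauli. Hence $\dim T=\dim S_2-1$. Second, $\pi^Q_{\{m\}}$ is injective on $T$: its kernel in $T$ consists of vectors supported only on qubit $m$ with pair in $\{0,[\alpha\ \beta]\}$, i.e.\ the span of $\mathbf{e}$ intersected with $S_2$, which is trivial by the same argument. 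Part 4 follows the same way with $S_2^\bot$ in place of $S_2$. $\phi$ is nonzero on $S_2^\bot$ because $S_2^\bot$ contains weight-one vectors at position $m$ with any pair unless $S_2$ is supported trivially there; the degenerate case is excluded since $d\ge2$ forces qubit $m$ to be touched by $S_2$ in two independent directions. Injectivity requires $\mathbf{e}\notin S_2^\bot$, which holds because $\mathbf{e}\in S_2^\bot\setminus S_2$ would give a logical operator of weight $1<d$.

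For part 3, first show $(S_2^\bot)^{[\alpha\ \beta]}\subseteq(S_2^{[\alpha\ \beta]})^\bot$ directly from the identity. Then compare dimensions in $\F_2^{2(n-1)}$. By part 2, $\dim(S_2^{[\alpha\ \beta]})^\bot=2(n-1)-(\dim S_2-1)=2n-\dim S_2-1$. By part 4, $\dim(S_2^\bot)^{[\alpha\ \beta]}=(2n-\dim S_2)-1$. The two agree, so the inclusion is an equality.

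I expect the main obstacle to be the nondegeneracy claims: that $\phi$ does not vanish identically on $S_2$ or on $S_2^\bot$, and that $\mathbf{e}$ lies in neither. These are exactly where $d\ge2$ has to be used carefully, and where I would fall back on the hypotheses implicit in \cite{PQSC} if the distance argument alone does not cover every case.
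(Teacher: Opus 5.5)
Your framework is sound. Writing the constraint as $\phi(\mathbf{v})=\langle\mathbf{v},\mathbf{e}\rangle_S$ works, and so does observing that $\{0,[\alpha\ \beta]\}$ is isotropic, which makes $\pi^Q_{\{m\}}$ preserve symplectic products on $\ker\phi$. The inclusion-plus-dimension argument for part 3 is also fine. The gap is in parts 2 and 4. Everything there rests on $\mathbf{e}\notin S_2$, and that does not follow from the hypothesis. The distance is the minimum over $C_n(S)\setminus S$, so degenerate codes are allowed, and $d\ge 2$ only excludes $\mathbf{e}\in S_2^\bot\setminus S_2$. For instance, $S=\langle X_1X_2X_3X_4,\ Z_1Z_2Z_3Z_4,\ Z_5\rangle$ has $d=2$, and puncturing $m=5$ with $[\alpha\ \beta]=[0\ 1]$ gives $\mathbf{e}=r(Z_5)\in S_2$. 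Qubit $5$ is touched by $S_2$ in only one direction, so your claim that $d\ge 2$ forces two independent directions is false. Every intermediate assertion you make then fails: $\phi\neq 0$ on $S_2$ and on $S_2^\bot$, injectivity of $\pi^Q_{\{m\}}$ on $S_2\cap\ker\phi$, and $\mathbf{e}\notin S_2^\bot$. Falling back on an unstated nondegeneracy hypothesis from the cited work does not prove the statement as given.

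The missing idea is that this case need not be excluded, because the two defects cancel. Since $\ker\pi^Q_{\{m\}}\cap\ker\phi=\{0,\mathbf{e}\}$, for any subspace $V$ you get $\dim\pi^Q_{\{m\}}(V\cap\ker\phi)=\dim V-\epsilon_1-\epsilon_2$. Here $\epsilon_1=1$ iff $\mathbf{e}\notin V^\bot$ (i.e.\ $\phi$ is nonzero on $V$), and $\epsilon_2=1$ iff $\mathbf{e}\in V$. Use $d\ge 2$ exactly once, in the form $\mathbf{e}\in S_2^\bot\iff\mathbf{e}\in S_2$. Then exactly one of $\epsilon_1,\epsilon_2$ equals $1$, both for $V=S_2$ and for $V=S_2^\bot$ (whose dual is $S_2$). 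In the example, $\phi$ vanishes on all of $S_2^\bot$, but $\pi^Q_{\{5\}}$ kills $r(Z_5)$, so the dimensions still drop from $3$ to $2$ and from $7$ to $6$. With parts 2 and 4 established this way, your part 3 argument goes through unchanged. This also pinpoints where $d\ge 2$ is essential: if $\mathbf{e}\in S_2^\bot\setminus S_2$, then $S_2$ would lose no dimension and $S_2^\bot$ would lose two.
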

For the proof, see \cite{PQSC}. The theorem implies that puncturing preserves symplectic self-orthogonality and therefore yields a valid stabilizer subspace. Moreover, both the stabilizer subspace and the centralizer subspace lose exactly one dimension under puncturing. Consequently, if the original code has the parameters $[\![n,k,d]\!]$, then the punctured code has parameters $[\![n',k',d']\!]$ where
\begin{align*}
    [\![n,k,d]\!]  \xrightarrow{\text{puncturing}} [\![n'=n-1,k'=k,d'\geq d-1]\!].
\end{align*}
The definition \eqref{eq_puncg2} requires evaluating all elements of the stabilizer subspace. For large subspaces, such an approach may be computationally undesirable. It is therefore advantageous to formulate puncturing directly on the stabilizer matrix. To this end, define the extended stabilizer matrix, $G_{\text{ext}}$, as the matrix whose rows form a basis of $S_2^\bot$. The extended stabilizer matrix is obtained by augmenting the stabilizer matrix $G_S$ from \eqref{eq:stabilizerMatrix} with $2k$ additional linearly independent rows that are symplectically orthogonal to all rows of the stabilizer matrix. Suitable row operations on the stabilizer matrix reveals such a basis ~\cite{SC&QEC}. Based on \Cref{theo_1},~\cite{PQSC} derives an efficient puncturing procedure that operates directly on $G_{\text{ext}}$. The resulting algorithm is given in \Cref{alg_punc}. An application of the algorithm performed on a specific stabilizer is presented in \hyperref[app_all]{the appendix}.
\begin{algorithm}[h]
\caption{Puncturing on the stabilizer Matrix}
\label{alg_punc}
\begin{algorithmic}[1]
\item[] \textit{Input:} Centralizer matrix $G_{\text{ext}}\in\F_2^{(n+k)\times 2n}$, $[\alpha\ \beta]\in\F_2^2\setminus\{[0\ 0]\}$ and $m\in[1;n]$

\State Denote row $i$ of $G_\text{ext}$ by $[\mathbf{a}^{(i)}\ \mathbf{b}^{(i)}]$ for $i\in[1;n+k]$

\State Initialize $\ell=0$ and $j=0$

\For{$i={1,\ldots,n-k}$}

    \State Compute $\sigma([\mathbf{a}^{(i)}\ \mathbf{b}^{(i)}]) = \langle[\mathbf{a}^{(i)}_m\ \mathbf{b}^{(i)}_m],[\alpha\ \beta]\rangle_S$

    \If{$\sigma([\mathbf{a}^{(i)}\ \mathbf{b}^{(i)}]) = 1$}
        \State $\ell\gets \ell+1$
        \If{$\ell=1$}
            \State Remove row $i$ from $G_\text{ext}$ and set $j=i$
        \Else 
        \EndIf
    \Else
        \State Continue
    \EndIf

\EndFor
\State Remove column $m$ and $m+n$ from the resulting matrix

\item[] \textit{Output:} Centralizer matrix of stabilizer punctured wrt. $[\alpha\ \beta]$ on qubit $m$
\end{algorithmic}
\end{algorithm}

\section[Methodology]{\textbf{Entanglement-Assisted Stabilizers}}\label{sec_eas}
Entanglement-assisted stabilizers generalize the stabilizer formalism by exploiting pre-shared entanglement between the transmitter and the receiver~\cite{CQEwE}. As discussed in \Cref{sec_preliminaries}, a stabilizer is Abelian, which is equivalent to requiring its associated stabilizer subspace to be symplectically self-orthogonal. This condition imposes a significant restriction on the admissible parity-check matrices. In particular, parity-check matrices of classical linear codes with parameters $[2n,n-k,d]$ can be used to define $[[n,k,d']]$ stabilizers only when their associated subspaces are symplectically self-orthogonal, with $d'$ depending on the particular parity-check matrix. Entanglement assistance overcomes this restriction: A parity-check matrix that fails to be symplectically self-orthogonal corresponds to a non-Abelian Pauli subgroup, which can be made Abelian by embedding it into a larger Hilbert space. To illustrate this idea, consider the two isomorphic non-Abelian subgroups $H,G\leq\mathcal{P}_4$ given by their generators in \eqref{eq_groups}. Here, $H$ is the subgroup from which we want to construct a stabilizer.
\begingroup
\renewcommand{\arraystretch}{1}   
\setlength{\arraycolsep}{3pt} 
\begin{equation}\label{eq_groups}
\begin{array}{ccccccc}
    \begin{NiceArray}{cccccc}
        \Block{1-6}{H}\\\hline
        \widetilde{Z}_1& :=& Z& X& Z& I\\
        \widetilde{X}_1& :=& Z& Z& I& Z\\
        \widetilde{Z}_2& :=& Y& X& X& Z\\
        \widetilde{Z}_3& :=& Z& Y& Y& X
    \end{NiceArray}
     &&&&&&
    \begin{NiceArray}{cccccc}
        \Block{1-6}{G}\\\hline
        Z_1&:=&Z&I&I&I\\
        X_1&:=&X&I&I&I\\
        Z_2&:=&I&Z&I&I\\
        Z_3&:=&I&I&Z&I
    \end{NiceArray}
\end{array}
\end{equation}
\endgroup
Since $H$ and $G$ are isomorphic, they have identical commutation relations, where $Z_1$ and $X_1$ (and similarly $\widetilde{Z}_1$ and $\widetilde{X}_1$) anti-commute while all other pairs commute. The subgroup $H$ can be decomposed into two subgroups based on the commutation relations~\cite{CQEwE}. The subgroup $\text{symp}(H):=\langle \widetilde{Z}_1,\widetilde{X}_1\rangle$ is referred to as the symplectic subgroup of $H$, while $\text{iso}(H):=\langle \widetilde{Z}_2,\widetilde{Z}_3\rangle$ is called the isotropic subgroup. The analogous decomposition exists for $G$. In this example, $\text{symp}(H)$ contains one symplectic pair that must be resolved to make $H$ Abelian. The structure of $G$ provides an evident embedding into $\mathcal{P}_5$ by the addition of a single qubit, yielding the Abelian subgroup $\widehat{G}$. Since $G$ and $H$ are isomorphic, the same extension induces an Abelian embedding $\widehat{H}$. The resulting groups are shown in \eqref{eq_groups_ext}.
\begingroup
\renewcommand{\arraystretch}{1.2}   
\setlength{\arraycolsep}{3pt} 
\begin{equation}\label{eq_groups_ext}
\begin{array}{ccccccc}
    \begin{NiceArray}{cccccc|[tikz=dashed]c}
        \Block{1-7}{\widehat{H}}\\\hline
        \widehat{\widetilde{Z}}_1& :=& Z& X& Z& I& Z\\
        \widehat{\widetilde{X}}_1& :=& Z& Z& I& Z& X\\
        \widehat{\widetilde{Z}}_2& :=& Y& X& X& Z& I\\
        \widehat{\widetilde{Z}}_3& :=& Z& Y& Y& X& I
    \end{NiceArray}
     &&&&&&
    \begin{NiceArray}{cccccc|[tikz=dashed]c}
        \Block{1-7}{\widehat{G}}\\\hline
        \widehat{Z}_1&:=&Z&I&I&I&Z\\
        \widehat{X}_1&:=&X&I&I&I&X\\
        \widehat{Z}_2&:=&I&Z&I&I&I\\
        \widehat{Z}_3&:=&I&I&Z&I&I
    \end{NiceArray}
\end{array}
\end{equation}
\endgroup
The codespace associated with $\widehat{G}$ is
\begin{align*}
    \ket{\psi} =  \ket{\Phi^+ }_{\mathcal{A}_1\mathcal{A}_5}\otimes \ket{00}_{\mathcal{A}_2\mathcal{A}_3}\otimes \ket{\gamma}_{\mathcal{A}_4}
\end{align*}
where $\ket{\Phi^+} = \frac{1}{\sqrt{2}}(\ket{00}+\ket{11})$ is a Bell state, and $\ket{\gamma}_{\mathcal{A}_4}$ is an arbitrary single-qubit state. The codespace of $\widehat{H}$ may be obtained by transforming its stabilizer matrix into the stabilizer matrix of $\widehat{G}$, which simultaneously reveals the corresponding encoding circuit~\cite{QCwE}. 
In general, it can be shown that each symplectic pair requires one shared Bell state in order to obtain an Abelian extension~\cite{optimal_ebits}.

Observe that the generators of $\widehat{H}$ and $\widehat{G}$ act identically on the fifth qubit. Consequently, this qubit is unaffected by the encoding operation and may therefore be distributed to the receiver prior to encoding. More generally, the receiver's halves of the shared Bell pairs need not pass through the noisy communication channel. This additional resource enables entanglement-assisted stabilizers to achieve code parameters that are unattainable within the standard stabilizer framework~\cite{CQEwE}.

Since one qubit from each Bell pair is retained by the receiver, these qubits are assumed to be error-free. As a consequence, the notion of distance differs slightly from that of unassisted stabilizers. Define 
\begin{align*}
    2c:=\abs{\text{symp}(H)} \qquad \text{and} \qquad s:=\abs{\text{iso}(H)}.
\end{align*}
As in the unassisted setting, Pauli errors may be classified according to their commutation relations with the elements in the stabilizer. The undetectable errors are those elements of $C_{n+c}(\widehat{H})\setminus \widehat{H}$ whose components on the receiver's $c$ qubits are equal to the identity. This set corresponds to $C_n(H)\setminus \text{iso}(H)$. The distance of an entanglement-assisted stabilizer is therefore defined as
\begin{align*}
    d := \underset{E\in C_n(H)\setminus \text{iso}(H)}{\min} w(E).
\end{align*}
An entanglement-assisted stabilizer encoding $k$ logical qubits into $n$ physical qubits while consuming $c$ Bell pairs is denoted by $[\![n,k,d;c]\!]$ with $n=s+c+k$. Such a code is referred to as an entanglement-assisted stabilizer with rate given by the tuple $\left(\frac{k}{n},\frac{c}{n}\right)$ \cite{CQEwE}.

A communication system employing entanglement-assisted error-correction is illustrated in \Cref{fig_com_framework}. The communication process consists of two phases. During the setup phase, a source distributes $c$ Bell pairs between the transmitter and the receiver. Subsequently, during the communication phase, the transmitter encodes the logical information jointly with $s$ ancilla qubits and its halves of the shared Bell pairs. The resulting $n=c+s+k$ qubits are transmitted through the noisy channel, while the receiver performs error-correction and decoding using both the received qubits and its locally stored halves of the entangled pairs. 

The setup phase may be performed independently of the communication phase, allowing entanglement resources to be accumulated in advance. The practical viability of this approach depends on maintaining coherence of the stored Bell pairs until decoding. Entanglement purification and memory-protection techniques may be employed to mitigate decoherence during this storage interval \cite{eaqecc_imperfect_ebits}.
\section[results]{\textbf{Puncturing Entanglement-Assisted Stabilizers}}\label{sec_punct_eas}

In this section, we extend the puncturing algorithm developed for unassisted stabilizers to a method capable of reducing the number of required Bell pairs in an entanglement-assisted stabilizer.

\subsection{Entanglement-Assisted Stabilizer as Unassisted Stabilizer}

Consider the $[\![4,1,3;1]\!]$ entanglement-assisted stabilizer $\widehat{H}$ of \eqref{eq_groups_ext}. The stabilizer is obtained by extending the non-Abelian subgroup $H\leq\mathcal{P}_4$ with an additional qubit, thereby obtaining an Abelian stabilizer $\widehat{H}\leq\mathcal{P}_5$. Since the
additional qubit may be interpreted as one half of a pre-shared Bell pair between the transmitter and the receiver, the resulting stabilizer is said to be entanglement-assisted.

An alternative realization is possible in which no entanglement is pre-distributed. For example, consider the stabilizer group $\langle Z_1,Z_2,Z_3,Z_4\rangle\leq\mathcal{P}_5$ with codespace
\begin{align*}
    \ket{0000}_{\mathcal{A}_1\mathcal{A}_2\mathcal{A}_3\mathcal{A}_4}\otimes\ket{\gamma}_{\mathcal{A}_5}.
\end{align*}
By applying a suitable encoding circuit, the codespace of this trivial stabilizer can be mapped into the codespace of $\widehat{H}$. In doing 
so, the setup phase illustrated in \Cref{fig_com_framework} becomes unnecessary. The price paid is that all five physical qubits must be transmitted through the noisy channel, whereas only four channel uses are required in the entanglement-assisted realization.

More generally, every $[\![n,k,d;c]\!]$ entanglement-assisted stabilizer admits an equivalent interpretation as an unassisted stabilizer with parameters $[\![n+c,k,d']\!]$, where the distance $d'$ need not equal the entanglement-assisted distance $d$. 

Let $\widehat{S}\leq\mathcal{P}_{n+c}$ denote an extended subgroup defining an arbitrary entanglement-assisted stabilizer. Denote the first $n$ qubits as channel qubits and the last $c$ qubits as receiver-side qubits. The distance of $\widehat{S}$ as an entanglement-assisted stabilizer is then
\begin{align}\label{eq_ea_dist}
    d := &\underset{E\in C_{n+c}(\widehat{S})\setminus \widehat{S}}{\min} w(E)\\
    &\qquad\text{s.t. } E = \bigotimes_{i=1}^n P_i \otimes \bigotimes_{i=n+1}^{n+c} I, \qquad P_i\in\{I,X,Y,Z\}.
\end{align}
That is, only errors acting on channel qubits are regarded as physically realizable. Since receiver-side qubits are assumed to be noiseless and are never transmitted through the channel, logical errors with support on these qubits do not contribute to the entanglement-assisted distance. 

In contrast, when the same code is interpreted as an unassisted stabilizer on $n+c$ qubits, every qubit is transmitted through the channel, and therefore every physical Pauli error must be considered. The corresponding distance is
\begin{equation}\label{eq_unassisted_dist}
  d' := \underset{E\in C_{n+c}(\widehat{S})\setminus \widehat{S}}{\min} w(E),
\end{equation}
with no restriction on the support of $E$. Since the minimization defining $d'$ is taken over a larger feasible set than that defining $d$, it follows immediately that $d'\leq d$. Consequently, interpreting an entanglement-assisted stabilizer as an unassisted stabilizer can never increase its distance and may reduce it substantially. An example is provided by the stabilizer
\begingroup
\renewcommand{\arraystretch}{1.2}  
\setlength{\arraycolsep}{3pt} 
\begin{equation}\label{eq_group_example}
    \begin{NiceArray}{ccccccc|[tikz=dashed]cc}
        \widehat{\widetilde{Z}}_1&:=&Z&X&Z&I&Z&Z&I\\
        \widehat{\widetilde{X}}_1&:=&Y&Z&X&I&I&X&I\\
        \widehat{\widetilde{Z}}_2&:=&X&I&Y&X&Z&I&Z\\
        \widehat{\widetilde{X}}_2&:=&X&Y&X&Z&X&I&X\\
        \widehat{\widetilde{Z}}_3&:=&Z&X&X&Z&X&I&I\\
        \widehat{\widetilde{Z}}_4&:=&Y&I&I&I&Y&I&I\\
    \end{NiceArray}
\end{equation}
\endgroup
The code described by \eqref{eq_group_example} may be interpreted either as a $[\![5,1,3;2]\!]$ entanglement-assisted stabilizer or as a $[\![7,1,2]\!]$ unassisted stabilizer. Indeed, the minimum-weight logical errors (for example $Z_4X_7$) have weight two and support on the last two qubits, corresponding to the receiver-side halves of the shared Bell pairs. These errors are excluded from the minimization in \eqref{eq_ea_dist} but included in \eqref{eq_unassisted_dist}. Hence, the entanglement-assisted distance is $d=3$, whereas the unassisted distance is $d'=2$.

\subsection{Extending Pre-Established Framework}\label{subsec_ext_framework}

Since every $[\![n,k,d;c]\!]$ entanglement-assisted stabilizer can be represented as an unassisted $[\![n+c,k,d']\!]$ stabilizer, this conversion provides an extension of \Cref{sec_puncture} to the entanglement-assisted setting. Rather than puncturing the entanglement-assisted stabilizer directly, we first convert it into an equivalent unassisted stabilizer, apply the puncturing algorithm, and finally recover an entanglement-assisted representation of the punctured code. The procedure is illustrated in \Cref{fig_distance}.

\begin{figure}[h]
    \centering
    \centering
\resizebox{0.7\linewidth}{!}{
\begin{tikzpicture}[
main/.style={rectangle, draw, fill=white, 
    text width=3cm, text centered, minimum height=1cm},
main2/.style={circle, draw, fill=white, 
    text width=.1mm, text centered, minimum height=.1mm},
line/.style = {draw, rounded corners, thick},
dot/.style={circle, color=black, fill=black, inner sep=0pt, minimum size=5pt},
linetext/.style = {font=\footnotesize, align=center}
]
\coordinate (ea1) at (0,3);
\coordinate (stab1) at (0,0);
\coordinate (stab2) at (6,0);
\coordinate (ea2) at (6,3);

\node[main] (ea1) at (ea1){$[[n,k,d;c]]$};
\node[main] (stab1) at (stab1){$[[n+c,k,d']]$};
\node[main] (stab2) at (stab2){$[[n+c-1,k,d'_p]]$};
\node[main] (ea2) at (ea2){$[[n,k,d_p;c-1]]$};
\path[line,->](ea1.south) -- node[pos=0.5,left]{Transform}(stab1.north);
\path[line,->](stab1.east) -- node[pos=0.5,above]{Puncture}(stab2.west);
\path[line,->](stab2.north) -- node[pos=0.5,right]{Transform}(ea2.south);

\end{tikzpicture}
}
    \caption{The four steps of puncturing an entanglement-assisted stabilizer.}
    \label{fig_distance}
\end{figure}

The only non-trivial step remaining is showing that the punctured unassisted stabilizer can always be mapped back to an entanglement-assisted stabilizer. Let 
\begin{align}
    \widehat{S}=\langle \widehat{\widetilde{Z}}_1,\widehat{\widetilde{X}}_1,\ldots,\widehat{\widetilde{Z}}_c,\widehat{\widetilde{X}}_c,\widehat{\widetilde{Z}}_{c+1},\ldots,\widehat{\widetilde{Z}}_{c+s}\rangle\leq \mathcal{P}_{n+c}
\end{align}
define a $[\![n,k,d;c]\!]$ entanglement-assisted stabilizer. By construction, each receiver-side qubit $n+m$ for $m\in[1;c]$ satisfies
\begin{align*}
    (\widehat{\widetilde{Z}}_m)_{n+m} = Z, \qquad (\widehat{\widetilde{X}}_m)_{n+m} = X,
\end{align*}
while for $i\in[1;c+s]\setminus\{m\}$ and $j\in[1;c]\setminus\{m\}$
\begin{align*}
    (\widehat{\widetilde{Z}}_i)_{n+m} = (\widehat{\widetilde{X}}_j)_{n+m} = I.
\end{align*}
Hence, the $(n+m)$'th qubit is occupied exclusively by the symplectic pair $(\widehat{\widetilde Z}_m,\widehat{\widetilde X}_m)$.

Consider puncturing qubit $n+m$. Since the only generators that acts non-trivially on the qubit is the symplectic pair $(\widehat{\widetilde{Z}}_m,\widehat{\widetilde{X}}_m)$, the puncturing algorithm only affects these generators. The algorithm removes exactly one element of the symplectic pair together with the punctured qubit, while the other element survives, now as an isotropic generator. Denote the generators after puncturing by $\{\widehat{\widetilde{Z}}^p_1,\widehat{\widetilde{X}}^p_1,\ldots,\widehat{\widetilde{Z}}^p_{c-1},\widehat{\widetilde{X}}^p_{c-1},\widehat{\widetilde{Z}}^p_{c},\ldots,\widehat{\widetilde{Z}}^p_{c+s}\}$, and define $\widetilde{\pi}_I:\mathcal{P}_n\rightarrow\mathcal{P}_{n-\abs{I}}$ by
\begin{align*}
    \widetilde{\pi}_I\left(\bigotimes_{i=1}^n X^{\mathbf{a}_i}Z^{\mathbf{b}_i}\right) := \bigotimes_{i\notin I} X^{\mathbf{a}_i}Z^{\mathbf{b}_i},
\end{align*}
for some indexing set $I\subset[1;n]$. Then, the surviving half of the affected symplectic pair, now part of the isotropic group, is given by
\begin{align}\label{eq_trans_punct}
    \widehat{\widetilde{Z}}^p_c = \begin{cases}
        \widetilde{\pi}_{\{m\}}(\widehat{\widetilde{Z}}_m), \qquad &[\alpha\ \beta]=[0\ 1],\\
        \widetilde{\pi}_{\{m\}}(\widehat{\widetilde{X}}_m), \qquad &[\alpha\ \beta]=[1\ 0],\\
        \widetilde{\pi}_{\{m\}}(\widehat{\widetilde{Z}}_m) + \widetilde{\pi}_{\{m\}}(\widehat{\widetilde{X}}_m), \qquad &[\alpha\ \beta]=[1\ 1].
    \end{cases}
\end{align}
The unaffected symplectic pairs are merely relabeled according to
\begin{align*}
    \widehat{\widetilde{A}}^p_i = \begin{cases}
        \widetilde{\pi}_{\{m\}}(\widehat{\widetilde{A}}_i), \qquad &1\leq i<m,\\
        \widetilde{\pi}_{\{m\}}(\widehat{\widetilde{A}}_{i+1}), \qquad &m\leq i< c
    \end{cases}, \qquad A\in\{Z,X\}.
\end{align*}
while 
\begin{align}
    \widehat{\widetilde{Z}}^p_{i}=\widetilde{\pi}_{\{m\}}(\widehat{\widetilde{Z}}_{i}), \qquad c<i\leq c+s. 
\end{align}    
Since $\widehat{\widetilde Z}_c^p$ acts trivially on all remaining receiver-side qubits, it commutes with every generator of the punctured stabilizer. Consequently, after deleting the remaining $c-1$ receiver-side qubits, the punctured stabilizer possesses exactly one additional isotropic generator and only $c-1$ symplectic pairs. Applying the inverse symplectic embedding therefore introduces precisely $c-1$ receiver-side qubits, yielding a valid entanglement-assisted stabilizer representation.

Let $\widehat{S}\leq\mathcal{P}_{n+c}$ define an $[\![n,k,d;c]\!]$ entanglement-assisted stabilizer. Puncturing one receiver-side qubit through the induced puncturing procedure produces an entanglement-assisted stabilizer defined by $\widehat{S}_p\leq\mathcal{P}_{n+c-1}$ with parameters $[\![n_p=n,k_p=k,d_p;c_p=c-1]\!]$. The number of transmitted qubits and logical qubits is preserved, while the required number of Bell pairs is reduced by one. The parameter transformation can therefore be summarized as
\begin{align*}
    [\![n,k,d;c]\!]  \xrightarrow{\text{puncturing}} [\![n_p=n,k_p=k,d_p;c_p=c-1]\!].
\end{align*}

Since one symplectic pair is replaced by an isotropic generator, the qubit previously associated with one half of a Bell pair on the transmitter side becomes an ancilla qubit. Consequently, $n_p=k+(s+1)+(c-1)=k+s+c=n$ showing that the total number of transmitted qubits remains unchanged. 

Unlike the parameters $n$, $k$, and $c$, the distance is highly volatile and far less predictable. Three distinct mechanisms may affect the distance: 
\begin{enumerate} 
    \item the conversion $ [\![n,k,d;c]\!] \rightarrow [\![n+c,k,d']\!] $ may decrease the distance from $d$ to $d'$, 
    \item puncturing may reduce the distance by at most one, yielding $ d'_p\ge d'-1 $, 
    \item the inverse conversion may increase the distance by excluding logical operators supported on receiver-side qubits. 
\end{enumerate}
Combining these observations yields $d_p\geq d'-1$. Hence $d-d_p\leq (d-d')+1$. Defining $B_d := (d-d')+1$, we obtain the bound $d_p\geq d-B_d$. Thus the total loss in distance caused by the puncturing procedure is bounded by $B_d$. For the stabilizer of \eqref{eq_group_example}, the corresponding unassisted code has distance $d'=2$ while the entanglement-assisted code has distance $d=3$. Therefore, $B_d=(3-2)+1=2$. This worst-case loss is attained when puncturing either the first or the last Bell pair with respect to the $X$ operator.

\subsection{Distance Considerations}\label{subsec_dist_cons}

For unassisted stabilizers, puncturing decreases the distance by at most one. More precisely,~\cite{PQSC} showed that if the $m$'th qubit of an $[\![n,k,d']\!]$ stabilizer is punctured with respect to $[\alpha\ \beta]\in\F_2^2\setminus\{[0\ 0]\}$ and $[\mathbf{
a}_m\ \mathbf{b}_m]\neq[\alpha\ \beta]$ for every $[\mathbf{a}\ \mathbf{b}]\in S_2^\bot\setminus S_2$ with $\sum_{i=1}^n \mathbbm{1}_{[\mathbf{a}_i=1\ \vee\  \mathbf{b}_i=1]}=d'$,  then the punctured code has distance $d'_p\geq d'$. For entanglement-assisted stabilizers the distance behaves less predictably, since the puncturing procedure consists of three operations: A conversion into an unassisted stabilizer, puncturing, and an inverse conversion. We can still provide requirements to ensure the distance is preserved; however, as the next theorem will show, they are quite extensive. 
\begin{theorem}\label{theo_2}
    Let $\widehat{S}\leq\mathcal{P}_{n+c}$ define an $[\![n,k,d;c]\!]$ entanglement-assisted stabilizer. Consider puncturing the receiver-side qubit $m\in[n+1;n+c]$ with respect to $[\alpha\ \beta]\in\F_2^2\setminus\{[0\ 0]\}$. Suppose every logical error $E\in C_{n+c}(\widehat{S})\setminus \widehat{S}$ satisfying $w(E)\leq d$ obeys at least one of the following conditions: \begin{enumerate} 
        \item $E$ has support on some receiver-side qubit $i\in[n+1;n+c]\setminus\{m\}$, or 
        \item $E$ has no support in $[n+1;n+c]\setminus\{m\}$ and $\mathbf{r(E)}_{m,n+m}\neq[\alpha\ \beta]$.
    \end{enumerate} 
    Then the entanglement-assisted stabilizer obtained after puncturing and performing the inverse conversion has distance $d_p\ge d$.  
\end{theorem}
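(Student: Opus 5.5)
We argue by contraposition. Take a logical error $E_p$ of the punctured entanglement-assisted code with $w(E_p)\le d-1$ and lift it to a logical error $E$ of $\widehat S$ with $w(E)\le d$ that violates both conditions. Since the punctured stabilizer $\widehat S_p\le\mathcal P_{n+c-1}$ is, by the construction of \Cref{subsec_ext_framework}, exactly the puncturing of $\widehat S$ at qubit $m$ with respect to $[\alpha\ \beta]$, we can use \Cref{theo_1}. Part~3 gives $(\widehat S_{p})_2^{\bot}=(S_2^{\bot})^{[\alpha\ \beta]}$, where $S_2$ denotes the binary image of $\widehat S$. Part~2, together with its counterpart for $S_2^\bot$ in part~4, shows that the logical cosets correspond.

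Concretely, a logical error $E_p$ of the punctured entanglement-assisted code acts trivially on the remaining $c-1$ receiver-side qubits. It lies in $C_{n+c-1}(\widehat S_p)\setminus\widehat S_p$, and it can be taken to have weight $d_p$. By part~3 there exists $[\mathbf a\ \mathbf b]\in S_2^\bot$ with $\langle[\mathbf a_m\ \mathbf b_m],[\alpha\ \beta]\rangle_S=0$ and $\pi^Q_{\{m\}}([\mathbf a\ \mathbf b])=\mathbf r(E_p)$. Let $E$ be the corresponding Pauli on $n+c$ qubits. The key step is to show $E\notin\widehat S$. If $E\in\widehat S$, then it would satisfy the defining constraint of \eqref{eq_puncg2} and so its puncture $E_p$ would lie in $\widehat S_p$, which is a contradiction. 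Thus $E\in C_{n+c}(\widehat S)\setminus\widehat S$ is a logical error of the unassisted code.

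Next we read off its support. $E$ agrees with $E_p$ outside qubit $m$, so it has no support on $[n+1;n+c]\setminus\{m\}$ and condition~1 fails. Its weight satisfies $w(E)\le w(E_p)+1\le d$. On qubit $m$ the local component $[\mathbf a_m\ \mathbf b_m]$ is symplectically orthogonal to $[\alpha\ \beta]$. In $\F_2^2$ the only such vectors are $[0\ 0]$ and $[\alpha\ \beta]$ itself, so we get two cases.

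In the case $[\mathbf a_m\ \mathbf b_m]=[\alpha\ \beta]$, the error $E$ has weight at most $d$ and violates condition~2. The hypothesis is therefore contradicted.

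In the case $[\mathbf a_m\ \mathbf b_m]=[0\ 0]$, the error $E$ is supported only on channel qubits and has weight $w(E_p)\le d-1$. This contradicts the definition \eqref{eq_ea_dist} of the entanglement-assisted distance $d$. (The hypothesis, which requires condition~2, also excludes it directly.)

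Hence no punctured logical error has weight below $d$, and $d_p\ge d$.

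The main obstacle is the lifting step. We must make sure that the inverse conversion, which deletes the receiver-side qubits and reinserts $c-1$ of them, does not change the centralizer-minus-stabilizer set. The punctured entanglement-assisted distance should equal the minimum over $C_{n+c-1}(\widehat S_p)\setminus\widehat S_p$ restricted to identity on the remaining receiver qubits, exactly as in \eqref{eq_ea_dist}. This holds because the surviving symplectic pairs are untouched on those qubits. The other delicate point is non-membership: a coset representative of $E_p$ modulo $\widehat S_p$ lifts to something outside $\widehat S$. This follows because the puncturing map restricted to the constrained subspace of $S_2$ maps onto $(S_2)^{[\alpha\ \beta]}$, and an element of the constrained subspace of $S_2^\bot$ lies in $S_2$ only if its image lies in $(S_2)^{[\alpha\ \beta]}$. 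The latter requires the kernel argument from \cite{PQSC}: under $d\ge 2$, no nonzero element of $S_2^\bot$ is supported solely on qubit $m$.
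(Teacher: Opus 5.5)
Your proof is correct, but it runs in the opposite direction from the paper's. The paper partitions the original low-weight logical errors $A$ into $B_1$, which the inverse conversion discards because they have support on another receiver qubit, and $B_2$. It then tracks each element of $B_2$ forward: either its component on qubit $m$ anticommutes with $[\alpha\ \beta]$ and it is ``removed'' by puncturing, or it has no support on $m$ and keeps weight $d$. That forward bookkeeping tacitly assumes that every short logical error of the punctured entanglement-assisted code arises from some element of $A$, and the paper never justifies this.

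Your contrapositive lifting supplies exactly this missing step. Part~3 of Theorem~\ref{theo_1} lifts any short $E_p$ to a constrained $E\in C_{n+c}(\widehat S)$. The definition of $S_2^{[\alpha\ \beta]}$ then gives $E\notin\widehat S$. Since the lift satisfies the constraint, its component on $m$ is $[0\ 0]$ or $[\alpha\ \beta]$, so the anticommuting case the paper discusses never needs to be handled. The paper's version gives intuition about which mechanism eliminates which operator, matching its three-mechanism discussion; yours gives a logically complete proof.

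Three small corrections:
\begin{enumerate}
\item In the $[0\ 0]$ case the hypothesis does not exclude $E$ directly. Such an $E$ satisfies condition~2, because $[0\ 0]\neq[\alpha\ \beta]$. The contradiction there comes solely from the definition of $d$, as your main sentence correctly states.
\item For the same reason, your opening claim that the lift ``violates both conditions'' holds only in the $[\alpha\ \beta]$ case.
\item The non-membership step needs no kernel argument; it follows immediately from the definition of $S_2^{[\alpha\ \beta]}$, as your main text shows. The kernel fact would only matter for the converse implication. In any case it holds unconditionally for a receiver-side qubit: a Pauli supported only on qubit $m$ must commute with the $Z$ and $X$ that $(\widehat{\widetilde Z}_m,\widehat{\widetilde X}_m)$ place there, so it is trivial.
\end{enumerate}
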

\begin{proof} 
    Let $E\in C_{n+c}(\widehat{S})\setminus \widehat{S}$ with $w(E)\leq d$. To prove $d_p\geq d$, it suffices to show that every such $E$ is absent from the logical error set of the resulting entanglement-assisted code. To this end, we divide the set 
    \begin{align}
        A:=\{E\ |\ E\in C_{n+c}(\widehat{S})\setminus \widehat{S}\wedge w(E)\leq d\}
    \end{align}
    into two disjoint subsets; 
    \begin{align}
        B_1&:=\{E\ |\ E\in A\wedge\exists i\in[n+1;n+c]\setminus\{m\}:\ E_i\neq I\},\\
        B_2&:=\{E\ |\ E\in A\wedge\forall i\in[n+1;n+c]\setminus\{m\}:\ E_i=I\}.
    \end{align}
    If $E\in B_1$, then $E$ has support on at least one receiver-side qubit distinct from the punctured qubit. After puncturing, the inverse conversion removes all logical errors whose support extends to one of the remaining receiver-side qubits. Hence $E$ cannot contribute to the distance of the final entanglement-assisted code. This is the requirement of Condition 1.
    
    Conversely, if $E\in B_2$, then $E$ may survive the inverse conversion, since its support is contained entirely within the channel qubits and the punctured receiver-side qubit. Therefore it must be eliminated by the puncturing operation itself or be of weight $d$ after puncturing. By the puncturing criterion of \Cref{sec_puncture}, a logical operator is removed precisely when $\langle \mathbf{r(E)}_{m,n+m},[\alpha\ \beta]\rangle_S = 1$. Since $E$ has support on qubit $m$ (unless possibly if $w(E)=d$. If $E$ does not have support on qubit $m$ puncturing does not alter the weight of $E$ and it may remain after puncturing), this condition is equivalent to $\mathbf{r(E)}_{m,n+m}\neq[\alpha\ \beta]$. Consequently every operator satisfying Condition 2 is either removed during puncturing or of weight $\geq d$ after puncturing. 
    
    In conclusion, every logical error of weight $\leq d$ satisfying condition 1 or 2 is either 
    \begin{enumerate} 
        \item removed by puncturing, 
        \item excluded by the inverse entanglement-assisted conversion, or
        \item of weight $d$ after puncturing.
    \end{enumerate} 
    Hence, if all $E\in A$ satisfy either of the conditions, no logical error of weight smaller than $d$ remains in the final code, proving $d_p\geq d$. 
\end{proof} 

As a side-note; from \eqref{eq_puncg2} and the structure of the inverse conversion, it is evident that all logical errors $\{E\  |\ E\in C_{n+c}(\widehat{S})\setminus \widehat{S}\,\wedge\, \forall i\in[n+1;n+c]:\ E_i=I\}$ of the unpunctured entanglement-assisted stabilizer remain after the puncturing and the inverse conversion. Consequently, $d_p\leq d$ for any puncturing scenario. 

The theorem provides a condition for preserving the entanglement-assisted distance during puncturing. In practice, only the receiver-side qubits are candidates for puncturing, and each such qubit admits three non-trivial puncturing choices corresponding to $[\alpha\ \beta]\in\{[1\ 0],[0\ 1],[1\ 1]\}$. Consequently, an $[\![n,k,d;c]\!]$ entanglement-assisted stabilizer has exactly $3c$ distinct puncturing scenarios. For codes with a modest number of Bell pairs, it may therefore be computationally advantageous to evaluate the distance of all $3c$ punctured codes offline and store the optimal puncturing choice. When a reduction in the required entanglement becomes necessary, the puncturing procedure can then be performed by selecting the pre-computed puncturing pattern, yielding the largest distance.

\subsection{Statistical Analysis}\label{subsec_stat} 
The previous sections established an upper bound on the loss in distance under entanglement-assisted puncturing and provided sufficient conditions under which the distance is preserved. To assess the practical relevance of these results, we now investigate the statistical behavior of the distance under puncturing. Following the sequence of transformations illustrated in \Cref{fig_distance}, define 
\begin{align*} 
    \Delta_1 &:= d'-d,\\ 
    \Delta_2 &:= d'_p-d',\\ 
    \Delta_3 &:= d_p-d'_p,\\ 
    \Delta_4 &:= d_p-d. 
\end{align*} 
Thus, negative values correspond to a decrease in distance, whereas positive values correspond to an increase.
Notice that the most important of these values is $\Delta_4$ since it describes the total distance decrease of the entire puncturing procedure.
Recall from \Cref{subsec_ext_framework} that the total loss in distance is bounded by $d-d_p\leq B_d=1-\Delta_1$. To quantify the tightness of this bound, define
\begin{align}
    \Delta_p = B_d-(d-d_p) = 1-\Delta_1+\Delta_4. 
\end{align}
Hence $\Delta_p$ measures the gap between the theoretical upper bound on the distance loss and the loss actually observed after puncturing. A value of $\Delta_p=0$ indicates that the bound is attained.

To assess the typical values of $\Delta_p$, we conducted a numerical experiment, where $10\,000$ entanglement-assisted stabilizers were generated randomly. The parameters were chosen such that $n\in[5;8]$, $k\in[1;5]$ and $c\in[1;5]$, with $n=s+c+k$ and $\ell=2c+s\in[n-2;n+2]$. For each realization, $\ell$ Pauli operators were sampled uniformly from $\mathcal P_n$. Whenever the resulting subgroup was non-Abelian and the associated unassisted stabilizer possessed distance at least two, the code was accepted. Repeating this procedure and then discarding any repeated codes yielded $9\,999$ distinct codes that were considered in the study. For each code, every admissible puncturing operation on every receiver-side Bell-pair qubit was evaluated. This resulted in a total of $79\,437$ puncturing scenarios. For each scenario the quantities $\Delta_i$ for $i\in\{1,2,3,4,p\}$ and $B_d$ were computed. The resulting empirical distributions are shown in \Cref{fig_delta}.
\begin{figure*}[tb] 
    \centering 
    \includegraphics[width=\linewidth]{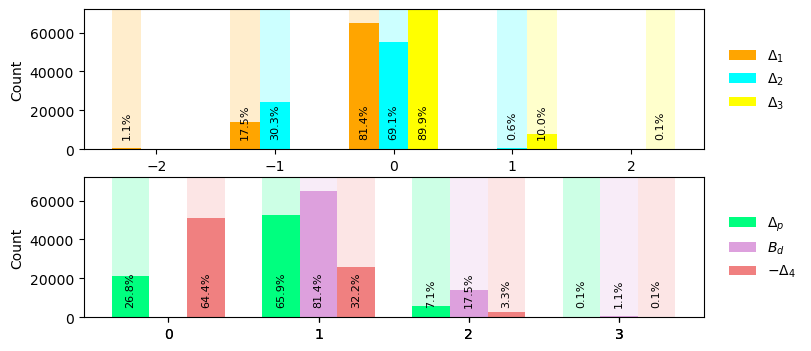} 
    \caption{Histograms of $\Delta_i$ for $i\in\{1,2,3,4,p\}$ and $B_d$ obtained from the $79\,437$ puncturing scenarios considered in this section.} 
    \label{fig_delta} 
\end{figure*} 

The histogram of $\Delta_1$ demonstrates that in $81.4\%$ of all puncturing scenarios the distance was unchanged when the entanglement-assisted code was viewed as an unassisted stabilizer. Similarly, the histogram of $\Delta_2$ shows that $69.1\%$ of the puncturing operations preserved the distance of the unassisted code. The distributions of $\Delta_1$ and $\Delta_3$ are strongly concentrated around zero. Only a small fraction of the examined codes exhibited a distance change greater than one when transitioning between the entanglement-assisted and unassisted representations. Interestingly, the histogram of $\Delta_2$ contains a small positive tail, indicating that approximately $0.6\%$ of puncturing operations increased the distance of the unassisted stabilizer. This behavior is consistent with puncturing removing low-weight logical errors from the centralizer.

Considering the overall puncturing procedure, represented by $\Delta_4$, less than $4\%$ of the puncturing scenarios experienced a distance reduction greater than one, while $64.4\%$ experienced no reduction in distance at all. Thus, for the majority of the examined codes, puncturing a Bell pair incurred either no loss or only a modest loss in distance. The behavior of the bound derived in \Cref{subsec_ext_framework} may be assessed through $\Delta_p$. Although the bound satisfied $B_d\geq 2$ for more than $18\%$ of all puncturing scenarios, equality was attained in only $26.8\%$ of the cases. Consequently, the bound is generally conservative and frequently overestimates the actual loss in distance.

Finally, among the $9\,999$ generated codes, $14.75\%$ exhibited a strictly larger distance in their entanglement-assisted representation than in their associated unassisted representation. Of these codes, $81.4\%$ admitted at least one puncturing scenario for which the original entanglement-assisted distance was preserved. This observation suggests that, for a substantial fraction of codes that gain distance from entanglement assistance, it remains possible to reduce the required number of Bell pairs without sacrificing error-correcting performance.
\section{\textbf{Discussion}}\label{sec_discussion}
The practical benefit of entanglement-assisted stabilizers depends on both the errors affecting the transmitted qubits and the reliability of the pre-shared Bell pairs. For an $[\![n,k,d;c]\!]$ entanglement-assisted stabilizer, the minimum distance $d$ characterizes protection against errors on the $n$ transmitted qubits under the assumption that the receiver-side halves of the Bell pairs are noiseless. The validity of this idealization depends on the noise accumulated throughout entanglement distribution and subsequent storage. If receiver-side errors become appreciable, the minimum distance $d$ alone is insufficient to characterize the performance under the combined noise processes. 
This issue becomes particularly relevant when Bell pairs are generated at different times. Differences in storage duration can result in unequal fidelities at the time of decoding, even when the pairs have comparable initial quality and experience similar memory noise. An unreliable Bell pair can therefore be more detrimental than a pair whose unavailability is known before encoding; a known resource shortage allows the code to be adapted, whereas an unknown error on a receiver-side qubit falls outside the noise model. 
The puncturing framework developed in this paper suggests a means of adapting the code to both the availability and the quality of the shared entanglement. Before encoding, a pair with sufficiently low fidelity could be excluded by puncturing its receiver-side qubit and re-initializing the corresponding transmitter-side qubit as a local ancilla. The resulting code has parameters $[\![n,k,d_p;c-1]\!]$, preserving the numbers of transmitted and logical qubits while reducing the entanglement requirement. 
Such an application introduces a tradeoff between the reliability of the retained entanglement and the distance of the punctured code. When the distance-preservation conditions established in \cref{theo_2} hold, an unreliable pair can be excluded without reducing the minimum distance. More generally, however, selecting a pair for removal should account for both its quality and the consequences of puncturing that qubit. Establishing when this strategy improves logical error rates requires evaluating the original and punctured codes under a joint model of memory and channel noise. 

\section{\textbf{Conclusion}}\label{sec_conclusion} 
This paper extends the puncturing framework of \cite{PQSC} to encompass the receiver-side Bell pairs of entanglement-assisted stabilizers. The proposed approach consists of three steps. First, an $[\![n,k,d;c]\!]$ entanglement-assisted stabilizer is interpreted as an associated $[\![n+c,k,d']\!]$ unassisted stabilizer. Second, the unassisted stabilizer is punctured using existing stabilizer puncturing techniques. Finally, the punctured code is mapped back to an entanglement-assisted stabilizer, yielding an $[\![n,k,d_p;c-1]\!]$ code. Consequently, the procedure reduces the entanglement consumption by one Bell pair while preserving the number of logical qubits. 

In contrast to the parameters $n$, $k$, and $c$, whose evolution under the puncturing procedure is straightforward, the behavior of the distance is substantially more intricate. The primary challenge stems from the fact that entanglement-assisted stabilizers may exhibit a larger distance than their associated unassisted stabilizers. This advantage arises because the receiver-side halves of the Bell pairs are assumed noiseless and are therefore excluded from the set of admissible errors when evaluating the distance. As a result, puncturing performed through the unassisted representation can potentially eliminate this advantage and lead to a degradation in distance when the code is transformed back into its entanglement-assisted form. To address this issue, a condition was derived under which the distance advantage of the entanglement-assisted code is preserved throughout the puncturing procedure.

Furthermore, the statistical study presented in \Cref{subsec_stat} indicates that randomly generated entanglement-assisted stabilizers rarely possess a distance advantage over their associated unassisted stabilizers. In these cases, the puncturing procedure exhibits considerably more predictable behavior, as the distance is less sensitive to the intermediate transformation into an unassisted code. Among the randomly generated codes that did exhibit a strict distance advantage, $81.4\%$ possessed at least one puncturing configuration for which the original distance was maintained. 

\section{\textbf{Future Work}}\label{sec_future_work}
Several directions remain for future work. First, the shared Bell pairs have been assumed to be perfect throughout this work. Although  entanglement purification and additional error-correction procedures may mitigate imperfections, the effect of using Bell pairs with fidelity $F<1$ remains to be investigated. Second, the proposed puncturing method applies only to the receiver's halves of the shared Bell pairs. Extending the method to allow puncturing of arbitrary qubits would broaden its applicability. Third, it would be interesting to see how puncturing relates to catalytic EAQECC~\cite{brun2014catalytic}, which address a related problem of limited entanglement availability. Finally, the potential use of puncturing for adaptive error-correction depends on efficiently updating the encoding and decoding circuits. In several examples, the encoder circuit obtained after puncturing retains much of the structure of the original circuit. However, a general characterization of this transformation and the corresponding updates to the decoder remains to be developed.
\section*{\textbf{Acknowledgments}}
This work was supported, in part, by the Danish National Research Foundation (DNRF), through the Center CLASSIQUE, grant nr.\ 187.

\bibliography{litteratur}

\appendix
\phantomsection
\label{app_all}
This appendix illustrates the puncturing procedure in \cref{alg_punc}
for the stabilizer defined in \eqref{eq_group_example}. We write $N=7$
for the total number of qubits in its unassisted representation; thus $N=n+c$ if we consider the entanglement-assisted setting. The
first five qubits belong to the transmitter, while qubits $6$ and $7$
are the receiver's halves of the two shared Bell pairs.

The extended stabilizer matrix is
\begingroup
\renewcommand{\arraystretch}{1.2}   
\setlength{\arraycolsep}{3pt} 
\begin{equation}\label{eq_punct_example_ext}
    \begin{NiceArray}{c|ccccc|[tikz=dashed]cc|ccccc|[tikz=dashed]cc}
        [\mathbf{a}^{(1)}\ \mathbf{b}^{(1)}]&0&1&0&0&0&0&0& 1&0&1&0&1&1&0\\
        [\mathbf{a}^{(2)}\ \mathbf{b}^{(2)}]&1&0&1&0&0&1&0& 1&1&0&0&0&0&0\\
        [\mathbf{a}^{(3)}\ \mathbf{b}^{(3)}]&1&0&1&1&0&0&0& 0&0&1&0&1&0&1\\
        [\mathbf{a}^{(4)}\ \mathbf{b}^{(4)}]&1&1&1&0&1&0&1& 0&1&0&1&0&0&0\\
        [\mathbf{a}^{(5)}\ \mathbf{b}^{(5)}]&0&1&1&0&1&0&0& 1&0&0&1&0&0&0\\
        [\mathbf{a}^{(6)}\ \mathbf{b}^{(6)}]&1&0&0&0&1&0&0& 1&0&0&0&1&0&0\\
        \hline
        [\mathbf{a}^{(7)}\ \mathbf{b}^{(7)}]&0&0&0&0&0&0&1& 0&0&0&1&0&0&0\\
        [\mathbf{a}^{(8)}\ \mathbf{b}^{(8)}]&0&0&0&0&0&0&0& 1&0&1&0&1&0&1
    \end{NiceArray}
\end{equation}
\endgroup
The first six rows form a basis of the stabilizer subspace $S_2$, and all eight rows form a basis of its symplectic dual $S_2^\perp$. The horizontal line separates the stabilizer basis from the two additional centralizer basis vectors, while the dashed vertical lines separate the transmitter and receiver coordinates within each binary block. The associated unassisted stabilizer has parameters $[\![7,1,2]\!]$.

We puncture the receiver qubit $m=6$ with respect to $[\alpha\ \beta]=[1\ 1]$. Define the linear functional
\begin{equation*}
    \sigma([\mathbf{a}\ \mathbf{b}])
    :=\langle[\mathbf{a}_6\ \mathbf{b}_6],[1\ 1]\rangle_S
    =\mathbf{a}_6+\mathbf{b}_6,
\end{equation*}
where addition is done over $\F_2$. Thus, a vector satisfies the puncturing condition precisely when its entries in columns $6$ and $N+6=13$ are either $[0\ 0]$ or $[1\ 1]$. For the rows of \eqref{eq_punct_example_ext},
\begin{equation*}
    \sigma([\mathbf{a}^{(1)}\ \mathbf{b}^{(1)}])=\sigma([\mathbf{a}^{(2)}\ \mathbf{b}^{(2)}])=1,
    \qquad
    \sigma([\mathbf{a}^{(i)}\ \mathbf{b}^{(i)}])=0\quad\text{for }i=3,\ldots,8.
\end{equation*}
Accordingly, $[\mathbf{a}^{(1)}\ \mathbf{b}^{(1)}]$ is selected as the pivot row. The only required row operation is
\begin{equation*}
    \widetilde{[\mathbf{a}^{(2)}\ \mathbf{b}^{(2)}]}=[\mathbf{a}^{(2)}\ \mathbf{b}^{(2)}]+[\mathbf{a}^{(1)}\ \mathbf{b}^{(1)}].
\end{equation*}
The transformed matrix is
\begingroup
\renewcommand{\arraystretch}{1.2}   
\setlength{\arraycolsep}{3pt} 
\begin{equation}
    \begin{NiceArray}{c|ccccc|[tikz=dashed]cc|ccccc|[tikz=dashed]cc}
    \CodeBefore
        \columncolor{gray!20}{7}
        \columncolor{gray!20}{14} 
        \rowcolor{gray!20}{1}    
    \Body
        [\mathbf{a}^{(1)}\ \mathbf{b}^{(1)}]&0&1&0&0&0&0&0& 1&0&1&0&1&1&0\\
        \widetilde{[\mathbf{a}^{(2)}\ \mathbf{b}^{(2)}]}&1&1&1&0&0&1&0& 0&1&1&0&1&1&0\\
        [\mathbf{a}^{(3)}\ \mathbf{b}^{(3)}]&1&0&1&1&0&0&0& 0&0&1&0&1&0&1\\
        [\mathbf{a}^{(4)}\ \mathbf{b}^{(4)}]&1&1&1&0&1&0&1& 0&1&0&1&0&0&0\\
        [\mathbf{a}^{(5)}\ \mathbf{b}^{(5)}]&0&1&1&0&1&0&0& 1&0&0&1&0&0&0\\
        [\mathbf{a}^{(6)}\ \mathbf{b}^{(6)}]&1&0&0&0&1&0&0& 1&0&0&0&1&0&0\\
        \hline
        [\mathbf{a}^{(7)}\ \mathbf{b}^{(7)}]&0&0&0&0&0&0&1& 0&0&0&1&0&0&0\\
        [\mathbf{a}^{(8)}\ \mathbf{b}^{(8)}]&0&0&0&0&0&0&0& 1&0&1&0&1&0&1
    \end{NiceArray}
\end{equation}
\endgroup
Deleting the shaded pivot row and columns $6$ and $13$ implements the puncturing map. In particular,
\begin{equation*}
    S_2^{[1\ 1]}=\operatorname{span}_{\F_2}\left\{\pi^Q_{\{6\}}\left(\widetilde{[\mathbf{a}^{(2)}\ \mathbf{b}^{(2)}]}\right),\pi^Q_{\{6\}}\left([\mathbf{a}^{(3)}\ \mathbf{b}^{(3)}]\right),\ldots,\pi^Q_{\{6\}}\left([\mathbf{a}^{(6)}\ \mathbf{b}^{(6)}]\right)\right\}.
\end{equation*}
The seven retained and punctured rows form a basis
of $S_2^{[1\ 1]}$. Written as elements of $\mathcal{P}_N$, they are
\begingroup
\renewcommand{\arraystretch}{1}  
\setlength{\arraycolsep}{3pt} 
\begin{equation}
    \begin{NiceArray}{ccccc|[tikz=dashed]cc}
        X&Y&Y&I&Z&I\\
        X&I&Y&X&Z&Z\\
        X&Y&X&Z&X&X\\
        Z&X&X&Z&X&I\\
        Y&I&I&I&Y&I\\
        \hline
        I&I&I&Z&I&X\\
        Z&I&Z&I&Z&Z
    \end{NiceArray}
\end{equation}
\endgroup
The first five rows generate the punctured stabilizer $S^{[1\ 1]}$. The last two rows represent logical errors whose binary representation completes the basis of $(S_2^{[1\ 1]})^\perp$. The final column corresponds to the remaining receiver qubit, originally qubit $7$.

The punctured unassisted stabilizer has parameters $[\![6,1,2]\!]$, whereas the punctured entanglement-assisted stabilizer has parameters $[\![5,1,2;1]\!]$. The minimum weight logical errors of the punctured unassisted stabilizer is $IIIZIX$, $IIZIYI$, $IIYXII$ and $YIZIII$, explaining why the punctured entanglement-assisted stabilizer achieves no advantage in distance compared to the punctured unassisted stabilizer.

\end{document}